\newtheorem{theorem}{Theorem}
\renewcommand{\vec}[1]{\bm{#1}}
\newcommand{\mat}[1]{\bm{#1}}
\newcommand{\abs}[1]{\lvert #1 \rvert}
\newcommand{\ind}[1]{\mathbbm{1}\lbrace #1 \rbrace}
\newcommand{\norm}[1]{\lVert #1 \rVert}
\definecolor{vir0}{RGB}{68,1,84}
\definecolor{vir1}{RGB}{72,24,106}
\definecolor{vir2}{RGB}{71,45,123}
\definecolor{vir3}{RGB}{66,64,134}
\definecolor{vir4}{RGB}{59,82,139}
\definecolor{vir5}{RGB}{51,99,141}
\definecolor{vir6}{RGB}{44,114,142}
\definecolor{vir7}{RGB}{38,130,142}
\newcommand{\Ltorch}{L_{\mathrm{torch}}}
\newcommand{\Lmin}{L_{\mathrm{min}}}
\newcommand{\QUBO}{\textsc{Qubo}\@\xspace}
\newcommand{\SC}{\textsc{SetCover}\@\xspace}
\newcommand{\TP}{\textsc{TorchPlacement}\@\xspace}
\newcommand{\BB}{\mathbb{B}}
\newcommand{\cS}{\mathcal{S}}
\newcommand{\cT}{\mathcal{T}}
\newcommand*{\defeq}{\mathrel{\vcenter{\baselineskip0.5ex\lineskiplimit0pt\hbox{\scriptsize.}\hbox{\scriptsize.}}}%
	=}
\newcommand*{\eqdef}{=\mathrel{\vcenter{\baselineskip0.5ex\lineskiplimit0pt\hbox{\scriptsize.}\hbox{\scriptsize.}}}}
\newcommand\equalhat{\mathrel{\stackon[1.5pt]{=}{\stretchto{%
				\scalerel*[\widthof{=}]{\wedge}{\rule{1ex}{3ex}}}{0.5ex}}}}
\DeclareMathOperator*{\argmin}{arg\,min}
\begin{document}

\copyrightyear{2023}
\copyrightclause{Copyright for this paper by its authors.
  Use permitted under Creative Commons License Attribution 4.0
  International (CC BY 4.0).}

\conference{LWDA'23: Lernen, Wissen, Daten, Analysen.
  October 09--11, 2023, Marburg, Germany}

\title{Efficient Light Source Placement\\using Quantum Computing}

\author[1]{Sascha M\"ucke}[%
orcid=0000-0001-8332-6169,
email=sascha.muecke@tu-dortmund.de,
url=https://smuecke.de/
]
\address[1]{Lamarr Institute, TU Dortmund University, Dortmund, Germany}

\author[2]{Thore Gerlach}[%
orcid=0000-0001-7726-1848,
email=thore.gerlach@iais.fraunhofer.de
]
\address[2]{Fraunhofer IAIS, Sankt-Augustin, Germany}

\begin{abstract}
\textsf{NP}-hard problems regularly come up in video games, with interesting connections to real-world problems.
In the game Minecraft, players place torches on the ground to light up dark areas.
Placing them in a way that minimizes the total number of torches to save resources is far from trivial.
In this paper, we use Quantum Computing to approach this problem.
To this end, we derive a QUBO formulation of the torch placement problem, which we uncover to be very similar to another \textsf{NP}-hard problem.
We employ a solution strategy that involves learning Lagrangian weights in an iterative process, adding to the ever growing toolbox of QUBO formulations.
Finally, we perform experiments on real quantum hardware using real game data to demonstrate that our approach yields good torch placements.
\end{abstract}

\begin{keywords}
  Quantum Computing \sep
  QUBO \sep
  Constrained Optimization \sep
  ADMM \sep
  Minecraft
\end{keywords}

\maketitle

\section{Introduction}

Intriguing scientific problems pop up when you would rather take your mind off them for a while, e.g., when playing video games --- although, of course, it comes as no complete surprise to encounter them in this domain.
Video games, by nature, pose a wide range of problems that players are expected to solve in order to progress, such as slaying enemies, traversing mazes or solving logical puzzles.
Some games have even been shown to be \textsf{NP}-complete \cite{kaye2000}.
It is, however, a mild surprise to encounter them in Minecraft, which is an exceptionally serene open-world sandbox game developed by Mojang and first published around 2010\footnote{\url{https://www.minecraft.net/}}.
In Minecraft, players can roam around a vast world containing mountains, oceans and caves made up entirely of cubic blocks, giving the game its unique and slightly retro visual quality (see \cref{fig:screenshots}).
The game has been praised for fostering creativity in children (and adults) and has even been used as a tool for teaching \cite{nebel2016}.

\begin{figure}[t]
	\centering
	\begin{subfigure}[t]{.45\textwidth}
		\centering
		\includegraphics[width=\textwidth]{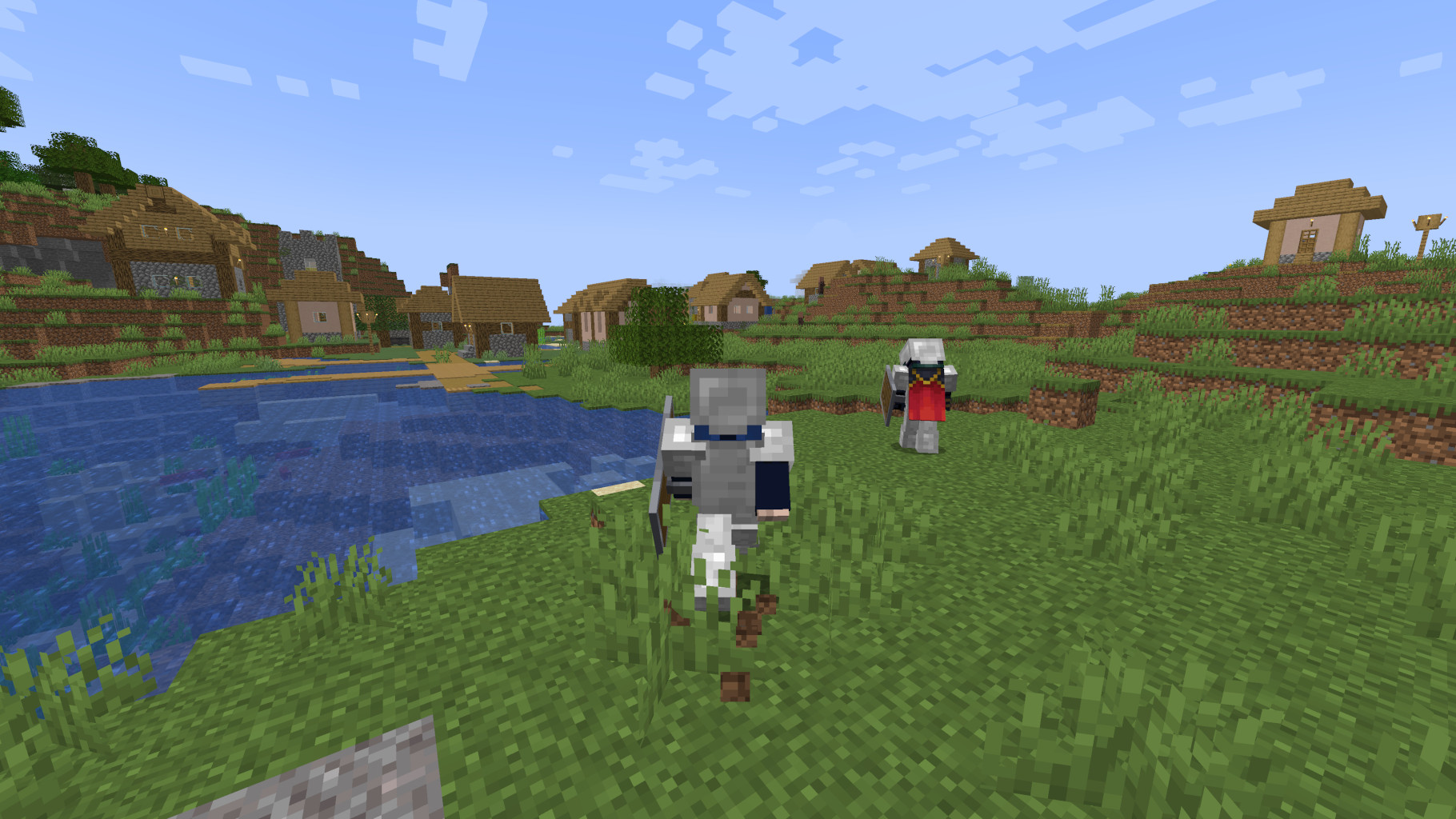}
		\subcaption{Players approaching a village}
		\label{fig:screenshot1}
	\end{subfigure}%
	\hspace*{5mm}
	\begin{subfigure}[t]{.45\textwidth}
		\centering
		\includegraphics[width=\textwidth]{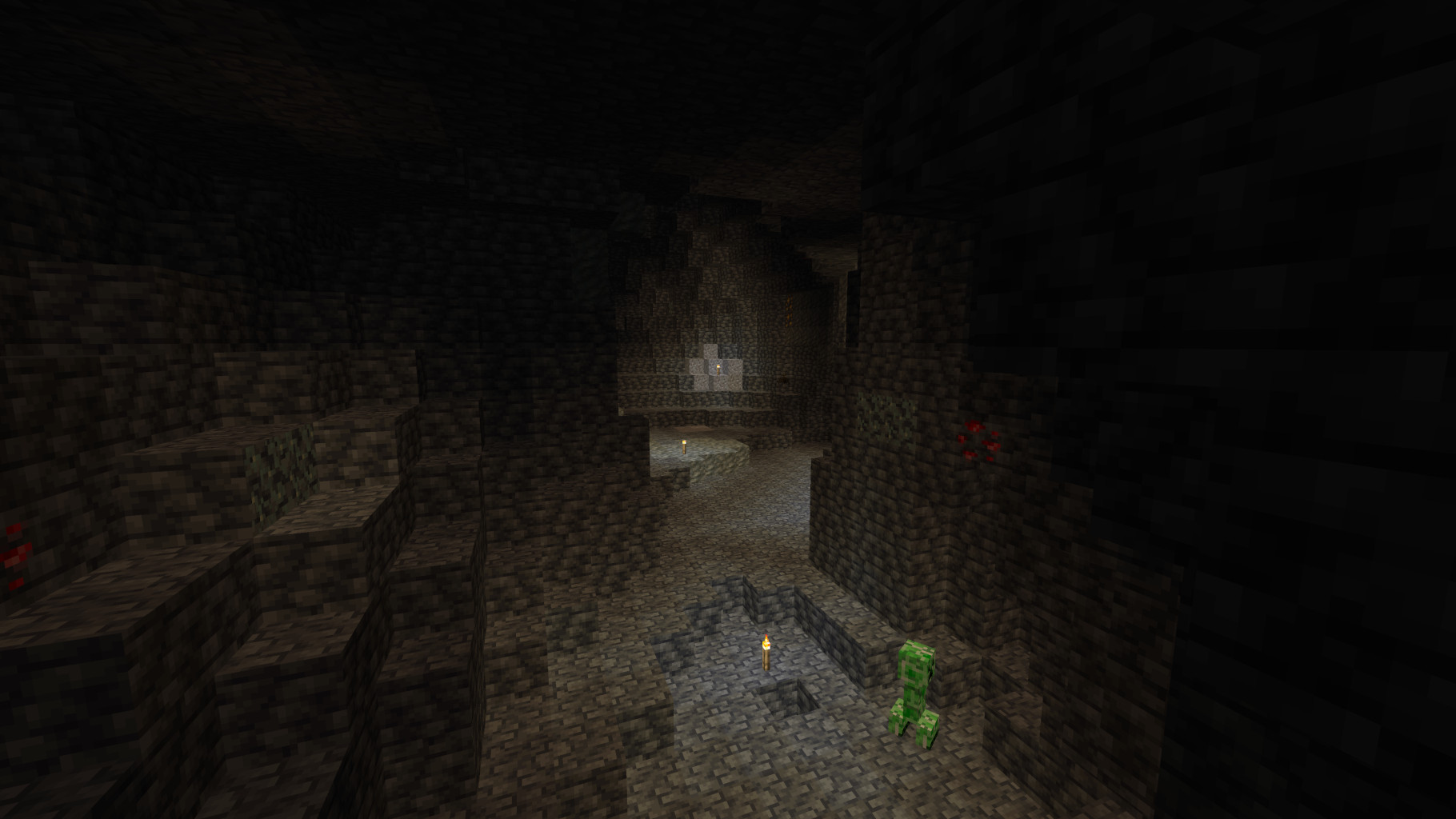}
		\subcaption{Dimly lit cave with hostile creature}
		\label{fig:screenshot2}
	\end{subfigure}
	\caption{Screenshots of Minecraft gameplay: In \cref{fig:screenshot1} two players are running towards a village through a landscape with a river on the left.
		The entire game world is made up of cubes called ``blocks''.
		\Cref{fig:screenshot2} shows a cave with a few torches that illuminate the floor partly.
		In the foreground, a hostile creature called ``Creeper'' is searching for players to attack. Such creatures spawn on the surface of blocks that are below a certain light level.}
	\label{fig:screenshots}
\end{figure}

\begin{figure}[b]
	\centering
	\begin{subfigure}[t]{0.3\textwidth}
		\centering
		\includegraphics[width=\textwidth]{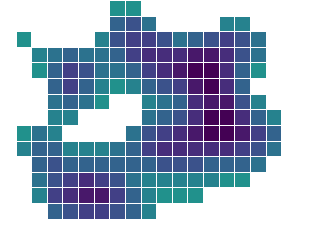}
		\subcaption{$20\times 15$}
	\end{subfigure}
	\hfill
	\begin{subfigure}[t]{0.3\textwidth}
		\centering
		\includegraphics[width=\textwidth]{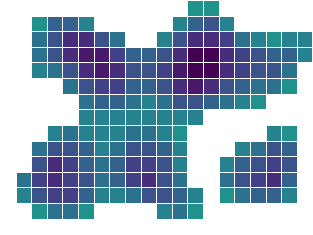}
		\subcaption{$20\times 15$ }
	\end{subfigure}
	\hfill
	\begin{subfigure}[t]{0.3\textwidth}
		\centering
		\includegraphics[width=\textwidth]{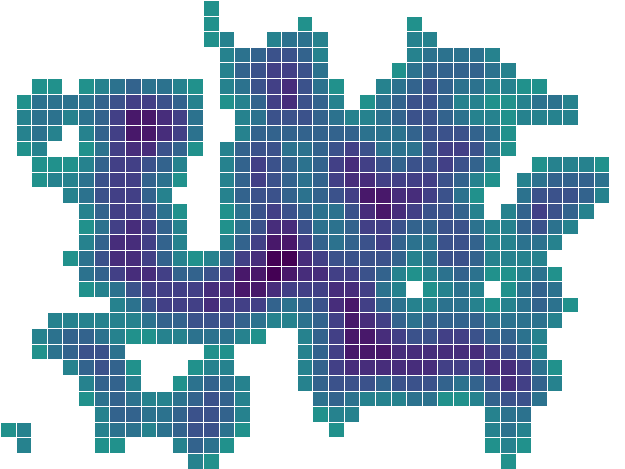}
		\subcaption{$40\times 30$}
	\end{subfigure}
	
	\begin{subfigure}[t]{0.3\textwidth}
		\centering
		\includegraphics[width=\textwidth]{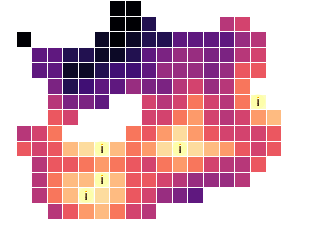}
		\subcaption{same as above, with torches}
	\end{subfigure}
	\hfill
	\begin{subfigure}[t]{0.3\textwidth}
		\centering
		\includegraphics[width=\textwidth]{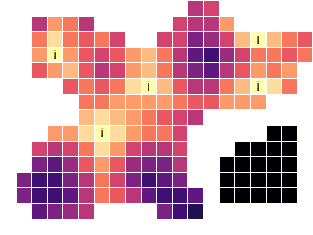}
		\subcaption{same as above, with torches}
	\end{subfigure}
	\hfill
	\begin{subfigure}[t]{0.3\textwidth}
		\centering
		\includegraphics[width=\textwidth]{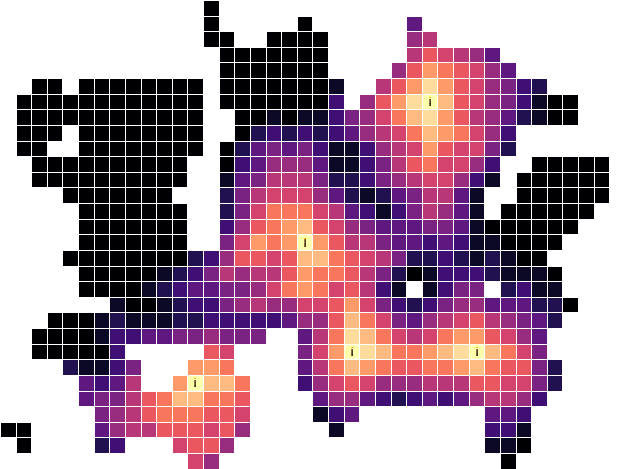}
		\subcaption{same as above, with torches}
	\end{subfigure}
	
	\caption{Examples of heightmaps used as input data for our experiments.
		In the top row, lighter color indicates higher elevation $z$.
		In the bottom row, lighter color indicates light level, and the little torch symboles represent torches placed on the respective tiles.
		White tiles are walls ($z=\infty$).
		The data is randomly generated from 2D Perlin noise.}
	\label{fig:heightmap-examples}
\end{figure}

However, in the darkness of a cave, hostile monsters can spawn on top of blocks that are insufficiently lit, either by the sun or a light source such as a torch or lantern.
These monsters will attack the player, who therefore wants to light up dark areas as best as they can while exploring.
At the same time, building torches requires valuable resources, which is why the player wants to use them sparingly, which leaves us with an interesting constrained optimization problem:
Where should we place torches to properly light up an area using as few torches as possible?

The first key observation is that, given the discrete nature of Minecraft, solutions to this problem can be described by binary variables $x_s\in\lbrace 0,1\rbrace\eqdef\BB$, where $x_s=1$ indicates that we should place a torch at location $s$.
Problems of this kind can be approached with Quantum Computing, which has been a promising focus of research in recent years.
Quantum Computers use \emph{Qubits} to perform computations, which can take values $0$ and $1$, but also be in a \emph{superposition} of both at once.
In particular, Quantum Annealers are being built that can be used to find the ground state of Ising models by means of quantum tunneling effects \cite{kadowaki1998,farhi2000}.
This model is equivalent to a class of quadratic binary optimization problems called \QUBO:
Given an upper triangular weight matrix $\mat Q\in\mathbb{R}^{n\times n}$, we want to find a binary vector $\vec x^*$ such that \begin{align}
	\vec x^* &= \underset{\vec x\in\BB^n}{\arg\min} ~f_{\mat Q}(\vec x) \nonumber\\
	\text{with}\quad f_{\mat Q}(\vec x)&\defeq \vec x^\top\mat Q\vec x = \sum_{1\leq i\leq j\leq n}Q_{ij}x_ix_j\label{eq:qubo}\;,
\end{align}%
where $f_{\mat Q}$ is called the \emph{energy} function.
This problem is \textsf{NP}-hard, and numerous problems have been re-formulated to be solvable in the form of \QUBO, ranging from economics and finance \cite{laughhunn1970,hammer1971,biesner2022} over satisfiability \cite{kochenberger2005}, resource allocation and routing problems \cite{neukart2017,stollenwerk2019} to machine learning \cite{bauckhage2018,mucke2019a,date2020,mucke2023,piatkowski2022,bauckhage2019}.
The promise of efficient solvability on Quantum Hardware has renewed and intensified research efforts for finding \QUBO formulations of various problems.

In this paper, we derive a formulation for the \TP problem.
Due to the large number of constraints, our method involves a learning procedure for Lagrangian weights. This is performed in an iterative fashion on a real quantum computer, similar to \cite{yonaga2020}.
We demonstrate our method in a range of experiments, using both generated and actual game data from a Minecraft world.
Further, we discuss the surprising connection to another optimization problem that we uncovered in the process of working on this article.

In \cref{sec:problem}, we give a formal description of the \TP problem.
In \cref{sec:qubo}, we develop a \QUBO formulation that attempts to solve it, combining quantum-enhanced optimization with an iterative learning scheme.
In \cref{sec:experiments}, we use our method to solve a range of example instances of \TP and discuss our observations.
\Cref{sec:related} puts our work in the context of similar work and related problems.
Finally, in \cref{sec:conclusion} we summarize our findings.

\section{Problem Statement}
\label{sec:problem}

As input, we are given a heightmap of a room, consisting of a rectangular grid of tiles, as shown in \cref{fig:heightmap-examples}.
The grid has integer width $w$ and height $h$.
Additionally, each grid site $s=(i,j)\in [h]\times[w]$ has an elevation value $z(s)\in\mathbb{N}_0\cup\lbrace\infty\rbrace$, where $\mathbb{N}_0$ denotes the set of natural numbers including $0$.
We use the notation $[n]\defeq\lbrace 1,\dots,n\rbrace$.
Intuitively, $z(s)$ is the ``floor level'' of $s$.
If $z(s)=\infty$, the tile is considered to be a wall.
Let $\cS=\lbrace s\in[h]\times[w] ~\vert ~z(s)<\infty\rbrace$ the set of all grid sites that are not walls.

Intuitively, the height map describes the top-down view of a 3-dimensional room consisting of cubic \emph{blocks} that can be either empty or non-empty (see \cref{fig:real_caves}).
Assuming a discrete coordinate system, each block has a location $(i,j,k)^\top\in\mathbb{Z}^3$.
If $z(i,j)=r$, then $\forall r'<r\leq r'': ~(i,j,r'')^\top$ is empty and $(i,j,r')^\top$ is non-empty.
Let the predicate $\mathsf{E}(\vec{p})\equiv\top$ (true) if and only if the block at $\vec{p}$ is empty.
Two blocks at $\vec{p}$ and $\vec{p}'$ are considered neighbors if $\norm{\vec{p}-\vec{p}'}_1=1$, i.e., if they share a face.
The set of empty neighbors of $\vec{p}$ is $\mathcal{N}(\vec{p})=\lbrace \vec p+\vec u \,\vert\, \vec u\in\mathbb{Z}^3, \,\lVert\vec u\rVert_1=1, \,\mathsf{E}(\vec p+\vec u)\rbrace$.
Finally, the distance between two sites $s=(i,j)$ and $t=(i',j')$, denoted by $d(s,t)$, is the shortest path from $(i,j,z(s))^\top$ to $(i',j',z(t))^\top$ moving through empty neighboring blocks:
\begin{equation}\label{eq:distance}
    d(s,t)=\begin{cases}
        0 &\text{if } s = t, \\
        1+\min_{s'\in\mathcal{N}(s)}d(s',t) &\text{otherwise}
    \end{cases}
\end{equation}

\noindent If there is no path between the two blocks, we set $d(s,t)=\infty$.
Examples of heightmaps with distances are shown in \cref{fig:distance-example}.

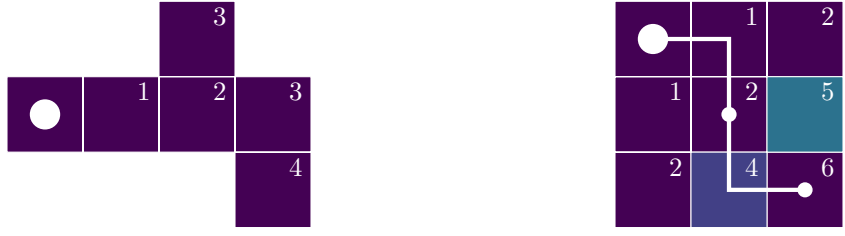
\begin{figure}
	\centering
	\begin{tikzpicture}[every node/.style={white}]
		\draw[draw=white, fill=vir0]
			(0,0) rectangle ++(1, 1)
			(1,0) rectangle ++(1, 1) node at (1.8, 0.8) {$1$}
			(2,0) rectangle ++(1, 1) node at (2.8, 0.8) {$2$}
			(2,1) rectangle ++(1, 1) node at (2.8, 1.8) {$3$}
			(3,0) rectangle ++(1, 1) node at (3.8, 0.8) {$3$}
			(3,-1) rectangle ++(1, 1) node at (3.8, -0.2) {$4$};
		\draw[draw=none, fill=white] (0.5, 0.5) circle (0.2);
	
		\draw[draw=white, fill=vir0]
			(8,1) rectangle ++(1,1)
			(9,1) rectangle ++(1,1) node at (9.8,1.8) {$1$}
			(10,1) rectangle ++(1,1) node at (10.8,1.8) {$2$}
			(8,0) rectangle ++(1,1) node at (8.8,0.8) {$1$}
			(9,0) rectangle ++(1,1) node at (9.8,0.8) {$2$}
			(8,-1) rectangle ++(1,1) node at (8.8,-0.2) {$2$}
			(10,-1) rectangle ++(1,1) node at (10.8,-0.2) {$6$};
		\draw[draw=white, fill=vir3]
			(9,-1) rectangle ++(1,1) node at (9.8,-0.2) {$4$};
		\draw[draw=white,fill=vir6]
			(10,0) rectangle ++(1,1) node at (10.8,0.8) {$5$};
		\draw[draw=none, fill=white] (8.5,1.5) circle (0.2);
		\draw[draw=white, ultra thick]
			(8.5,1.5) -- ++(1,0) -- ++(0,-2) -- ++(1,0);
		\draw[draw=none, fill=white]
			(9.5,0.5) circle (0.1)
			(10.5,-0.5) circle (0.1);
	\end{tikzpicture}
	\caption{Examples of distances in heightmaps: The number in the top right corner is the distance to the tile with the white circle. The distance is the shortest path through empty space moving from neighbor to neighbor.
	\textbf{Left:} All tiles have the same elevation. \textbf{Right:} The tile with distance 5 has elevation 2, the tile with distance 4 has elevation 1, all the others have elevation 0. A shortest path from the top left to the bottom right tile is indicated by the white line. Changes in elevation are drawn as smaller circles. The shortest path leads over the tile with elevation 1. The number of discrete steps taken gives the distance.}
	\label{fig:distance-example}
\end{figure}

Given a heightmap,  our objective is to place torches on some of the tiles in such a way that all tiles have a minimal light level of $\Lmin$, using as few torches as possible.
Light levels are integers $l\in\mathbb{N}_0$, where $0$ indicates complete darkness.
If a torch is placed on a tile, the tile itself receives a light level of $\Ltorch$.
The light spreads to surrounding tiles and decreases gradually with distance $d$ described in \cref{eq:distance}, both horizontally and with elevation.
If a tile is illuminated by multiple nearby torches, it assumes the maximum over all light levels.
Note that the way that light spreads in Minecraft does not at all align with physical reality, but is game-specific.
Where in reality we would expect light to move in straight lines and be blocked by obstacles, light in Minecraft has no sense of direction, but spreads evenly to all neighboring blocks, and can even ``creep around corners''.
Let $\cT\subseteq\cS$ be the set of all tile sites that have a torch.
The light level of each tile $s$ can then be described mathematically as \begin{equation*}
	l(s~\vert ~\cT,\Ltorch) = \max_{t\in \cT} ~\max\lbrace 0,\Ltorch-d(s,t)\rbrace\;.
\end{equation*}
If $\cT=\emptyset$, then we set $\forall s: ~l(s~\vert ~\emptyset)=0$.
Overall, the problem we are trying to solve can be formalized as \begin{align}
	\text{min}\quad &\lvert \cT\rvert &&\text{with } \cT\subseteq\cS \label{eq:min-norm}\\
	\text{subject to}\quad &l(s~\vert ~\cT,\Ltorch)\geq \Lmin &&\forall s\in\cS\;. \label{eq:constr}
\end{align}
\noindent
In this article, we set $\Ltorch=14$ and $\Lmin=8$, which are the same values as in Minecraft.

\section{A QUBO formulation for \TP}
\label{sec:qubo}

As hinted at in the introduction to this article, \TP lends itself to \QUBO as its candidate solution space is the set of binary vectors $\BB^n$.
In order to derive a \QUBO formulation for \TP, we need to (1) embed the candidate solutions into binary vectors and (2) formulate a 2nd order polynomial whose minimizing binary vector corresponds to a solution of our original problem.
The latter is done by computing the \QUBO parameter values $Q_{ij}$ (see \cref{eq:qubo}) from our problem instance at hand, which in our case is a heightmap with tiles $\cS$ and its corresponding distance function $d:\cS\times\cS\rightarrow\mathbb{N}_0$ as described in \cref{sec:problem}.

We firstly arrange the elements of $\cS$ in an arbitrary but fixed order $S$, such that $S_i$ denotes the location of the $i$-th tile for all $i\in [n]$ with $n\defeq\abs{\cS}$.
A subset $\cT\subseteq\cS$ then corresponds to a binary vector $\vec x\in\BB^n$ via $\cT\equalhat(\ind{S_i\in\cT})^\top_{i\in[n]}$, where $\ind{\cdot}$ is the indicator function defined as \begin{equation*}
	\ind{\mathsf{B}} \defeq \begin{cases}
		1 &\text{if } \mathsf{B}\equiv\top,\\
		0 &\text{otherwise}\;.
	\end{cases}
\end{equation*}%
\noindent
for some Boolean expression $\mathsf{B}$.
\Cref{eq:min-norm} can now be written as \begin{align*}
	\text{min}\quad &\norm{\vec x}_1 \quad\text{with } \vec x\in\BB^n\;.
\end{align*}
It is easy to see that if we set $Q_{ii}=P ~\forall i\in[n]$ with some arbitrary positive value $P>0$, the energy function $f_{\mat Q}$ will only be minimal for the binary vector $\vec 0$ with $f_{\mat Q}(\vec 0)=0=\norm{\vec 0}_1$.

Embedding the constraints from \cref{eq:constr} into \QUBO is far more challenging:
Given a candidate solution $\vec x$, we need to ensure that the light level is at least $\Lmin$, which is equivalent to saying that the distance to the nearest torch must be short enough that the light coming from it is still bright enough.
To ensure this, the torch's distance can be at most $\Ltorch-\Lmin$: \begin{align}\label{eq:max-constr}
	\min_{\substack{j\in [n],\\ x_j=1}} d(S_i,S_j) &\leq \Ltorch-\Lmin ~\forall i\in[n]\;.
\end{align}

There are two problems with this set of inequalities:
Firstly, $\max$ is a non-linear operation, and \QUBO can only represent quadratic energy functions.
Secondly, \QUBO is unconstrained by definition, which is why we need to employ techniques to embed the constraints directly into the weight matrix $\mat Q$ such that the unconstrained solution still obeys our constraints.
In the following subsections we describe how we approached both of these challenges.

\subsection{Eliminating the Maximum Function}
\label{sec:eliminating}

There are several smooth approximations of the $\max$ functions, such as the LogSumExp, $p$-norm or the Boltzmann operator.
As an example, consider the first-mentioned function \begin{equation*}
	\mathrm{LSE}_{\alpha}(a_1,\dots,a_n)\defeq \frac{1}{\alpha}\log\sum_{i=1}^ne^{\alpha a_i}\;.
\end{equation*}
We can re-write our inequality constraints in \cref{eq:max-constr} by inverting the sign and incorporating the condition $x_j=1$ to $\max_{j\in[n]}((P_i-d(S_i,S_j))x_j-P_i)\geq \Lmin-\Ltorch$ for some $P_i>\min_{j\neq i}d(S_i,S_j)$, which using $\mathrm{LSE}_\alpha$ yields \begin{align*}
	&&\frac{1}{\alpha}\log\sum_je^{\alpha(P_i-d(S_i,S_j))x_j-\alpha P_i} &\geq \Lmin-\Ltorch \\
	\Leftrightarrow &&\sum_j e^{-\alpha P_i}e^{\alpha(P_i-d(S_i,S_j))x_j} &\geq e^{\alpha(\Lmin-\Ltorch)}
\end{align*}%
Thanks to all $x_j$ being binary, we can re-write $e^{ax_j}$ as $(e^a-1)x_j+1$, which yields a linear constraint \begin{align*}
	&&n+\sum_j e^{-\alpha P_i}(e^{\alpha(P_i-d(S_i,S_j))}-1)x_j &\geq e^{\alpha(\Lmin-\Ltorch)} \\
	\Leftrightarrow &&\vec x^\top\vec u^{(i)}+v_i&\leq 0\;,
\end{align*}%
where $u^{(i)}_j=e^{-\alpha P_i}-e^{-\alpha d(S_i,S_j)}$ and $v_i=e^{\alpha(\Lmin-\Ltorch)}-n$.
$\mathrm{LSE}_\alpha$ approaches $\max$ better the larger we choose $\alpha$.
We implemented these constraints and found that approximations of $\max$ lead to numerical instability due to the large magnitude differences between the coefficients, rendered solutions output from the \QUBO solvers useless.

However, there is a simpler formulation that exploits the binary nature of our constraints. For that, we define the matrix $\vec{D}\in\{0,1\}^{n\times n}$ with entries $d_{ij}$ to be
\begin{equation*}
	d_{ij}=\ind{d(S_i,S_j)\leq \Ltorch-\Lmin},\quad \forall i,j\in[n]\;.
\end{equation*}
In order to ensure that $\min_j d(S_i,S_j) \leq \Ltorch-\Lmin$, we can check whether \begin{equation}
	\sum_j x_jd_{ij}=\left(\vec{D}\vec{x}\right)_i \geq 1\;, 
	\label{eq:binary_distance}
\end{equation}
which immediately yields linear constraints and circumvents the $\max$ function entirely.

Using \cref{eq:binary_distance} the optimization problem in \cref{eq:min-norm,eq:constr} can be reformulated to 
\begin{align}
	\min_{\vec x\in\{0,1\}^n}\quad &\vec{1}^{\top}\vec{x} \label{eq:min_torches}\\
	\text{subject to}\quad &\vec{D}\vec{x}\geq \vec{1} \label{eq:min_light}\;,
\end{align}
where $\vec{1}$ is the $n$-dimensional vector consisting only of ones, and $\geq$ is applied element-wise.
Realizing this, we find that \TP has striking similarity to the \SC problem:

\begin{theorem}
	\TP is a special case of \SC.
\end{theorem}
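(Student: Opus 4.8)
The plan is to show that every instance of \TP literally \emph{is} an instance of \SC, by exhibiting a value-preserving identification of the two optimization problems; the word ``special case'' (as opposed to merely ``polynomially reducible to'') will then be justified by the fact that the identification is the identity on the ground set and introduces no auxiliary structure. Concretely, given a heightmap with non-wall tiles $\cS$ and distance function $d$, I would take the \SC universe to be $\cS$ itself and the set system to be $\mathcal{C}=\lbrace C_s\rbrace_{s\in\cS}$ with $C_s\defeq\lbrace t\in\cS \mid d(t,s)\leq\Ltorch-\Lmin\rbrace$; that is, $C_s$ is the set of tiles a single torch at $s$ would light to level at least $\Lmin$. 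Read with rows indexed by universe elements $S_i$ and columns by the sets $C_{S_j}$, the incidence matrix of $(\cS,\mathcal{C})$ is by construction exactly the matrix $\vec D$ of \cref{eq:binary_distance}.

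Next I would verify that the bijection $\cT\subseteq\cS \leftrightarrow \lbrace C_s\rbrace_{s\in\cT}$ between torch placements and subfamilies of $\mathcal{C}$ respects feasibility and objective value. Writing $\cT\equalhat\vec x$, the chosen sets cover $\cS$ exactly when every universe element is hit, i.e.\ $(\vec D\vec x)_i\geq 1$ for all $i$, which is precisely the constraint \cref{eq:min_light}; and the number of chosen sets equals $\vec 1^\top\vec x$, which is the \TP objective \cref{eq:min_torches}. Hence the two constrained programs coincide term by term, and $\cT^\star$ is an optimal torch placement iff $\lbrace C_s\rbrace_{s\in\cT^\star}$ is a minimum set cover of $(\cS,\mathcal{C})$. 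A brief sanity check belongs here as well: since $\Ltorch\geq\Lmin$ we have $d(s,s)=0\leq\Ltorch-\Lmin$, so $s\in C_s$ and thus $\bigcup_{s\in\cS}C_s=\cS$, meaning the constructed \SC instance is always legal and feasible --- mirroring the trivial feasibility of putting a torch on every tile.

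I expect none of the above to be technically hard; the only genuine decision is what ``special case'' ought to mean, and I would settle it by the structural observation that the construction adds no elements and uses only that $d$ is a symmetric metric with $d(s,s)=0$ (symmetry holds because a path through empty neighbouring blocks is reversible), so none of the $3$-D grid geometry matters. Consequently the image of the reduction is the subclass of \SC instances in which the universe and the index set of the set system are the same finite set $V$, connected by a reflexive, symmetric relation obtained by thresholding a graph metric on $V$ --- equivalently, \TP is \textsc{DominatingSet} on the graph that joins two tiles whenever their distance is at most $\Ltorch-\Lmin$, which is the textbook special case of \SC. Spelling out this subclass precisely is what upgrades the claim from ``reduces to'' to ``is a special case of'', and would be the part I would be most careful about.
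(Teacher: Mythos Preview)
Your proposal is correct and follows essentially the same construction as the paper: the paper also takes the \SC universe to be $\cS$, indexes the sets by $[n]$, and defines $B_i=\lbrace s\in\cS : d(S_i,s)\leq\Ltorch-\Lmin\rbrace$, concluding immediately that a minimum-cardinality cover corresponds to a minimum torch placement. Your write-up is more thorough---you explicitly verify the feasibility check $\bigcup_s C_s=\cS$, tie the incidence matrix to $\vec D$, and add the pleasant observation that the resulting \SC instance is in fact a \textsc{DominatingSet} instance on the threshold graph---none of which the paper spells out, but the underlying reduction is identical.
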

\begin{proof}
	In \SC, we are given a set $A$ and a collection of subsets $B_i$ with $B_i\subseteq A$ for each $i\in I$, and $\bigcup_{i\in I}B_i=A$.
	The objective is to find a $J\subseteq I$ such that $\abs{J}$ is minimal and $\bigcup_{j\in J}B_j=A$.
	Given a heightmap with tiles $\cS$ and distance function $d$ as defined in \cref{eq:distance}, we set $A=\cS$, $I=[n]$ and $B_i=\lbrace s\in\cS: ~d(S_i, s)\leq\Ltorch-\Lmin\rbrace$ for all $i\in[n]$.
	Thus, a solution $J\subseteq I$ with minimal $\abs{J}$ is a minimal set of torches that illuminates all other tiles.
\end{proof}

\subsection{Handling Inequality Constraints}

The question remains how to obtain a \QUBO from the constrained formulation in \cref{eq:min_torches,eq:min_light}. 
To answer this question, we remark that the linear inequality constraint in \cref{eq:min_light} can be reformulated with using an auxiliary vector $\vec{z}\in\mathbb N_0^n$ with positive integer entries
\begin{align*}
	\vec{D}\vec{x}\geq \vec{1}\ \Leftrightarrow \
	\vec{D}\vec{x}-\vec{1}=\vec{z}\;,
\end{align*}
leading to the slightly different but equivalent problem formulation
\begin{align}
	\min_{\vec x\in\{0,1\}^n}\quad &\vec{1}^{\top}\vec{x} \label{eq:min_torches_trans}\\
	\text{subject to}\quad &\vec{D}\vec{x}-\vec{1}-\vec{z}= 0,\quad \vec{z}\in\mathbb N_0^n \label{eq:min_light_trans}\;.
\end{align}
A common way of solving an optimization problem with equality constraints as in \cref{eq:min_torches_trans,eq:min_light_trans} is to introduce a matrix of binary slack variables $\mat{S}\in \{0,1\}^{n\times m}$ with $m\defeq \left\lceil\log_2n\right\rceil$ \cite{vyskovcil2019}.
Considering the vector $\vec{r}$ corresponding to all powers of two up to $2^m$, i.e., $\vec{r}=\left(1, 2,4,\dots, 2^m\right)^{\top}$, we find that $\vec{z}=\mat{S}\vec{r}$.
The constrained problem in \cref{eq:min_torches_trans,eq:min_light_trans} can then be reformulated to an equivalent unconstrained problem by introducing a penalty parameter $\beta>0$
\begin{align}
	\min_{\vec x\in\{0,1\}^n,\mat{S}\in\{0,1\}^{n\times m}}\quad &\vec{1}^{\top}\vec{x}+\beta \left\|\mat{D}\vec{x}-\vec{1}-\mat{S}\vec{r}\right\|^2\;.
	\label{eq:slack_qubo}
\end{align}
Using this formulation, however, comes with the overhead of introducing $nm$ additional binary variables leading to a total \QUBO dimension of $n(m+1)$.
As we are still in the era of noisy intermediate-scale quantum (NISQ) computers \cite{preskill2018}, it is better to resort to a \QUBO formulation that uses fewer qubits.

To this end, we use an iterative method called Alternating Direction Method of Multipliers (ADMM) \cite{boyd2011}.
Firstly, we establish a new problem formulation
\begin{align}
	\min_{\vec x\in\{0,1\}^n,\vec{z}\in\mathbb Z^n}\quad &\vec{1}^{\top}\vec{x}+\gamma\vec{1}^{\top}\Theta(\vec{z})\\
	\text{subject to}\quad& \vec{c}\left(\vec{x},\vec{z}\right)= 0\;,
	\label{eq:problem_heaviside}
\end{align}
with $\gamma>0$, $\vec{c}\left(\vec{x},\vec{z}\right)\defeq \vec{D}\vec{x}-\vec{1}-\vec{z}$, and $\Theta$ being an element-wise step function defined as $\Theta(\vec z)\defeq(\ind{z_1< 0},\dots,\ind{z_n< 0})$.
The term $\vec{1}^{\top}\Theta(\vec{z})$ penalizes vectors $\vec{z}\in\mathbb Z^n$ with negative entries, since we want $\vec{z}\in\mathbb N^n_0$.
Note that optimal solution vectors $\vec{x}\in\{0,1\}^n$, $\vec{z}\in\mathbb Z^n$ of \cref{eq:problem_heaviside} are also optimal for the problem in \cref{eq:min_torches_trans,eq:min_light_trans}.
We then introduce the augmented Lagrangian \cite{powell1969, gabay1976}
\begin{equation}
	L\left(\vec{x},\vec{z},\vec{\lambda},\mu\right)\defeq \vec{1}^{\top}\vec{x}+\gamma\vec{1}^{\top}\Theta(\vec{z})+\vec{\lambda}^{\top}\vec{c}\left(\vec{x},\vec{z}\right)+\frac{\mu}{2}\left\|\vec{c}\left(\vec{x},\vec{z}\right)\right\|^2\;,
	\label{eq:aug_lagrangian}
\end{equation}
with $\vec{\lambda}$ and $\mu$ being coefficients and multipliers for the penalty terms.
For minimizing \cref{eq:aug_lagrangian}, we use ADMM, which is outlined in \cref{alg:admm}.
\begin{algorithm}[t]
	\caption{Alternating Direction Method of Multipliers (ADMM)}\label{alg:admm}
	\begin{algorithmic}[1]
		\Require Initial $\mu_0 > 0$,
		\Ensure $\vec{x}^*$, $\vec{z}^*$, $\vec{\lambda}^*$, $\mu^*$ optimizing $L\left(\vec{x},\vec{z},\vec{\lambda},\mu\right)$ 
		\State $\vec{z}^* \gets \vec{0}$
		\State $\vec{\lambda}^* \gets \vec{0}$
		\State $\mu^* \gets \mu_0$
		\Repeat
		\State $\vec{x}^* \gets \argmin_{\vec{x}} L\left(\vec{x},\vec{z}^*,\vec{\lambda}^*,\mu^*\right)$ 
		\Comment{Quantum annealer can be used}
		\label{alg:admm:qubo}
		\State $\vec{z}^* \gets \argmin_{\vec{z}} L\left(\vec{x}^*,\vec{z},\vec{\lambda}^*,\mu^*\right)$
		\label{alg:admm:z}
		\State $\vec{\lambda}^*\gets \vec{\lambda}^*+\mu^*\vec{c}\left(\vec{x},\vec{z}\right)$
		\State Update $\mu^*$ \label{alg:admm:update}
		\Until{a convergence criterium is met}
		\label{alg:admm:criterium} 
	\end{algorithmic}
\end{algorithm}
The vectors $\vec{x}$ and $\vec{z}$ are updated in an alternating fashion.
In our case, line \ref{alg:admm:qubo} can be written as
\begin{align*}
	\argmin_{\vec{x}\in\{0,1\}^n}\ L\left(\vec{x},\vec{z},\vec{\lambda},\mu\right)
	&=\argmin_{\vec{x}\in\{0,1\}^n}\ \vec{1}^{\top}\vec{x}+\vec{\lambda}^{\top}\vec{c}\left(\vec{x},\vec{z}\right)+\frac{\mu}{2}\left\|\vec{c}\left(\vec{x},\vec{z}\right)\right\|^2 \\
	&=\argmin_{\vec{x}\in\{0,1\}^n}\ \vec{1}^{\top}\vec{x}+\vec{\lambda}^{\top}\vec{D}\vec{x}+\frac{\mu}{2}\left(\vec{x}^{\top}\vec{D}^{\top}\vec{D}\vec{x}-\left(\vec{1}+\vec{z}\right)^{\top}\vec{D}\vec{x}\right)\;,
\end{align*}
corresponding to a \QUBO formulation in \cref{eq:qubo}, which can be solved with a quantum computer.
Line \ref{alg:admm:z} can be reduced to
\begin{align*}
	\argmin_{\vec{z}\in\mathbb Z^n}\ L\left(\vec{x},\vec{z},\vec{\lambda},\mu\right)
	&=\argmin_{\vec{z}\in\mathbb Z^n}\ \gamma\vec{1}^{\top}\Theta(\vec{z})+\vec{\lambda}^{\top}\vec{c}\left(\vec{x},\vec{z}\right)+\frac{\mu}{2}\left\|\vec{c}\left(\vec{x},\vec{z}\right)\right\|^2 \\
	&=\argmin_{\vec{z}\in\mathbb Z^n}\ \gamma\vec{1}^{\top}\Theta(\vec{z})+\frac{\mu}{2}\left\|\vec{D}\vec{x}-\vec{1}-\vec{z}+\frac{1}{\mu}\vec{\lambda}\right\|^2 \\
	&=\max\{\vec{0}, \vec{D}\vec{x}-\vec{1}\}\;,
\end{align*}
where $\vec{0}$ is the $n$-dimensional vector consisting only of zeros and $\max$ is taken element-wise.

For solving our original problem in \cref{eq:min_torches,eq:min_light} we solve a sequence of \QUBO problems and update the other parameters according to \cref{alg:admm}. 
Using a quantum computer for the \QUBO instances results in a hybrid quantum-classical algorithm. 
Let the subscript $k$ denote the state of the parameters $\vec{x}^*,\vec{z},^*\vec{\lambda}^*,\mu^*$ after the $k$-th iteration of ADMM.
We choose the following update rule for $\mu_k^*$ in line \ref{alg:admm:update}
\begin{align*}
	\mu_{k+1} = \begin{cases}
		\rho \mu_{k}, &\text{if }\left\|\vec{c}\left(\vec{x}_k,\vec{z}_k\right)\right\| > 10\mu_{k} \left\|\vec{D}\left(\vec{z}_k-\vec{z}_{k+1}\right)\right\|\;, \\
		\mu_{k}/\rho, &\text{if } \left\|\vec{D}\left(\vec{z}_k-\vec{z}_{k+1}\right)\right\|> 10\mu_{k}\left\|\vec{c}\left(\vec{x}_k,\vec{z}_k\right)\right\|\;,  \\
		\mu_k,&\text{else}\;,
	\end{cases}
\end{align*}
with a fixed learning rate $\rho\ge1$, following \cite{boyd2011}. 
In place of a convergence criterium in line \ref{alg:admm:criterium} we use a fixed budget of $N$ of calls to the quantum computer.

\section{Experimental Evaluation}
\label{sec:experiments}

\begin{figure}
	\centering
	\begin{subfigure}[c]{0.49\textwidth}
		\centering
		\includegraphics[width=\textwidth]{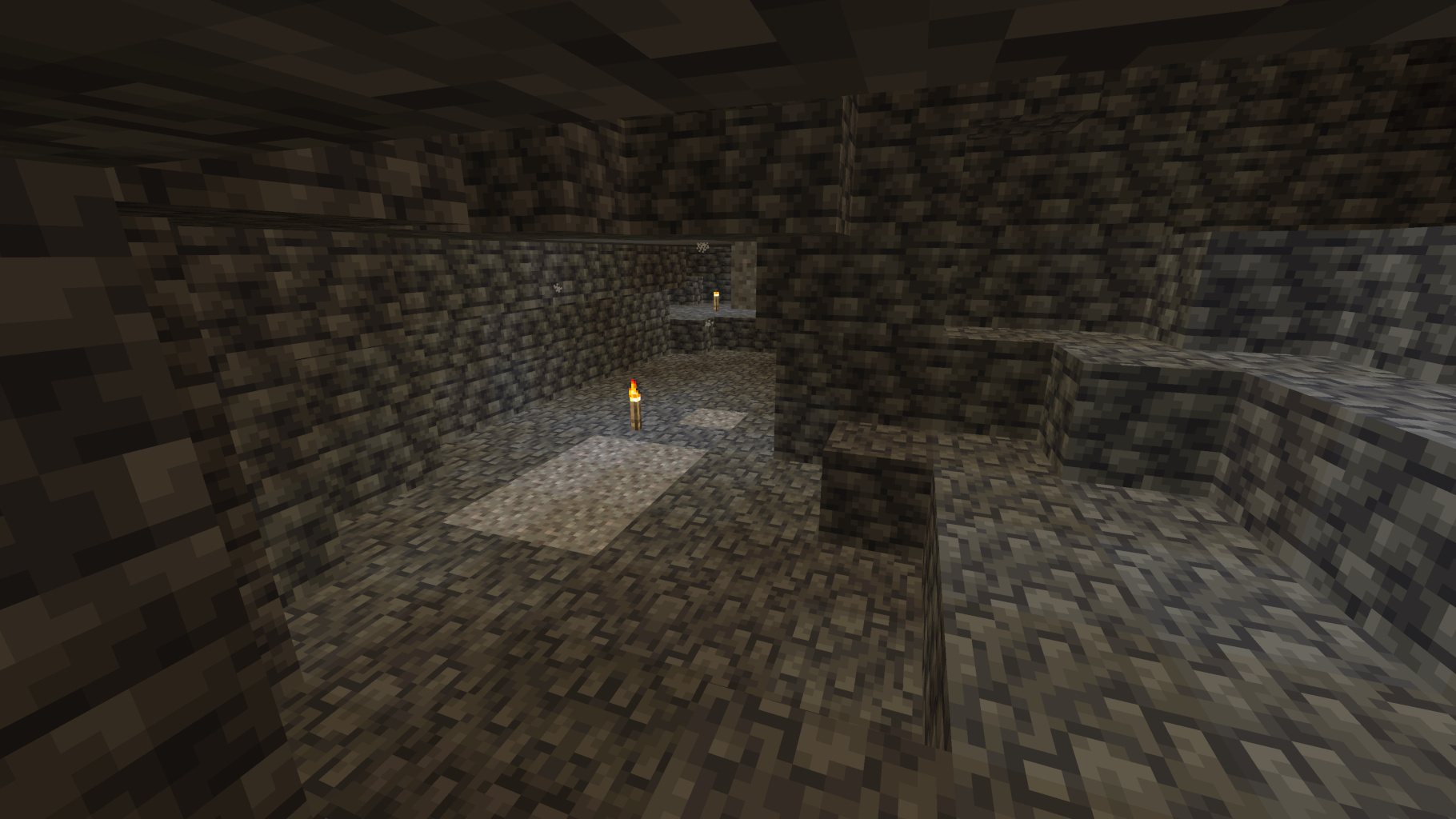}
	\end{subfigure}%
	\hfill
	\begin{subfigure}[c]{0.49\textwidth}
		\centering
		\includegraphics[width=\textwidth]{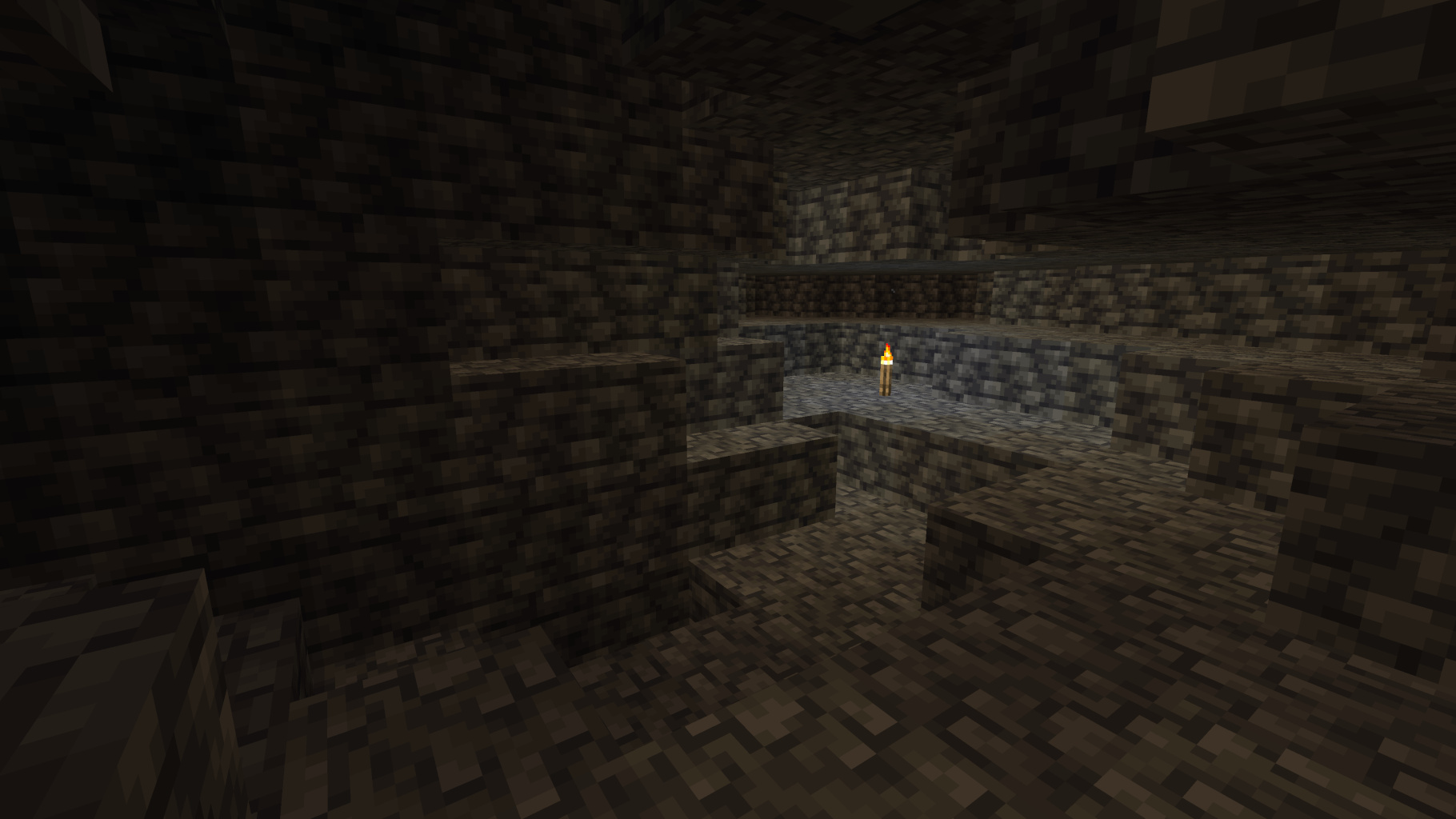}
	\end{subfigure}%
	\caption{Screenshots of Minecraft caves. In \cref{sec:experiments} we use heightmaps based on these caves for our experiments.}
	\label{fig:real_caves}
\end{figure}

We conduct experiments on different exemplary heightmaps.
Some heightsmaps are randomly generated from 2D Perlin noise \cite{perlin1985}, others are created from cave sections within a Minecraft world.
We compare four different \QUBO solvers utilizing the D-Wave Ocean\footnote{\url{https://docs.ocean.dwavesys.com/en/stable/}} Python package:
Simulated annealing (SA), Tabu search (Tabu), a combination of those two (TabuSA) and a real quantum annealer (QA), namely a D-Wave Advantage System 5.4 with 5614 qubits and 40,050 couplers.
The combination of SA and Tabu is achieved by using the D-Wave Hybrid\footnote{\url{https://docs.ocean.dwavesys.com/projects/hybrid/en/stable/}} Python package, parellelizing both methods, leading to better performance.
For all solvers we use the standard parameters.
As a pre-processing step we perform parameter compression \cite{mucke2023optimum}, which leads to improved results for QA.
The ADMM is repeated 10 times for every solver and heightmap and we plot the mean performance along with the $95\%$-confidence intervals.
We found the hyperparameters $\mu_0=0.01$ and $\rho=1.1$ to yield the best results, hence we fix these values for the upcoming experiments.

In \cref{fig:dwave_performance}, we compare the performances of the different \QUBO solvers for one real Minecraft cave from \cref{fig:real_caves} (right) with $n=67$.
\begin{figure}
	\centering
	\includegraphics[width=.9\textwidth]{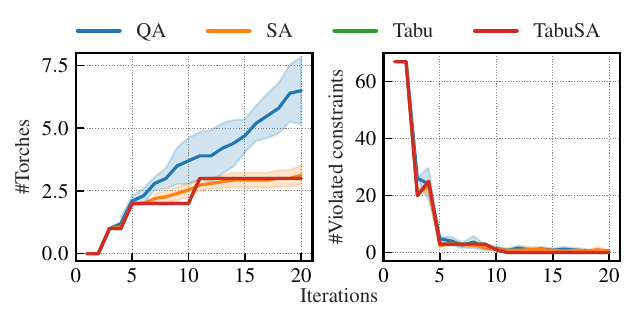}
	\caption{Performance of the four \QUBO solvers QA, SA, Tabu and TabuSA on the real Minecraft cave depicted in \cref{fig:real_caves} (bottom).}
	\label{fig:dwave_performance}
\end{figure}
For this, we plot the number of torches as well as the number of violated constraints over the iterations of ADMM.
We can see that the four \QUBO solvers converge to fulfilling all constraints, while QA places the most torches.
This indicates that its solution quality falls behind the other methods, which is a common problem of NISQ devices.
However, in \cref{fig:minecraft-cave-2} we depict the result of ADMM with QA for a particular run, after $2$, $5$ and $10$ iterations.
After 10 iterations, an optimal solution is found, such that every block is lit.

From now on, we use TabuSA, since the dimension of the upcoming problems is too large to obtain useful results with QA.
\cref{fig:minecraft-cave-1} depicts solutions after a certain number of ADMM iterations on the heightmaps of the other real Minecraft cave from \cref{fig:real_caves} (left) with $n=195$.
\begin{figure}
	\centering
	\begin{subfigure}[t]{0.24\textwidth}
		\centering
		\includegraphics[width=.7\textwidth]{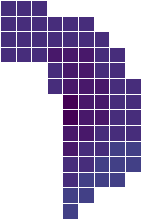}
		\subcaption{Heightmap}
	\end{subfigure}
	\begin{subfigure}[t]{0.24\textwidth}
		\centering
		\includegraphics[width=.7\textwidth]{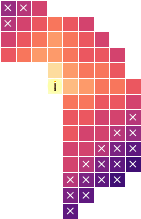}
		\caption{2 iterations, (20)}
	\end{subfigure}
	\begin{subfigure}[t]{0.24\textwidth}
		\centering
		\includegraphics[width=.7\textwidth]{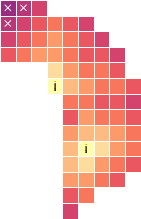}
		\caption{5 iterations (3)}
	\end{subfigure}
	\begin{subfigure}[t]{0.24\textwidth}
		\centering
		\includegraphics[width=.7\textwidth]{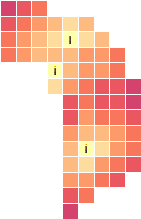}
		\caption{10 iterations (0)}
	\end{subfigure}
	\caption{Real Minecraft cave (\cref{fig:real_caves}, right) with $n=67$; white crosses signify light level below $\Lmin$. Number of constraint violations in parentheses. The \QUBO instances were solved using QA.}
	\label{fig:minecraft-cave-2}
\end{figure}
Again, we can see that with more iterations, more torches are placed and the constraints become fulfilled.
\begin{figure}
	\centering
	\begin{subfigure}[t]{0.24\textwidth}
		\centering
		\includegraphics[width=\textwidth]{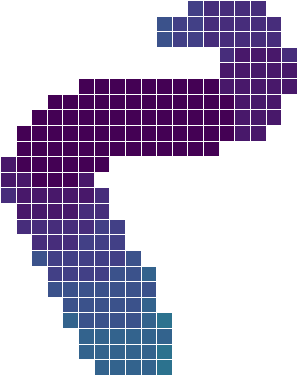}
		\subcaption{Heightmap}
	\end{subfigure}
	\begin{subfigure}[t]{0.24\textwidth}
		\centering
		\includegraphics[width=\textwidth]{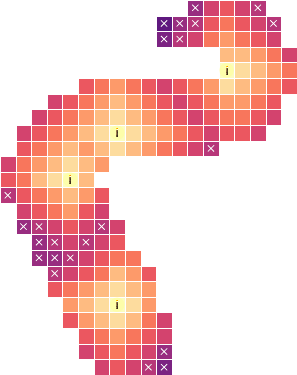}
		\caption{2 iterations (23)}%4 torches placed after 2 iterations, 23.}
\end{subfigure}
\begin{subfigure}[t]{0.24\textwidth}
	\centering
	\includegraphics[width=\textwidth]{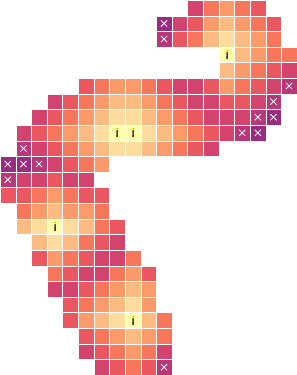}
	\caption{5 iterations (14)}%5 torches placed after 5 iterations, 14.}
	\end{subfigure}
	\begin{subfigure}[t]{0.24\textwidth}
\centering
\includegraphics[width=\textwidth]{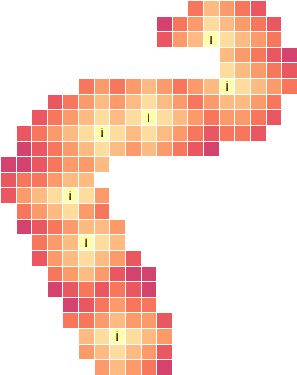}
\caption{After 20 iterations (0)}%7 torches placed after 20 iterations, 0.}
\end{subfigure}
\caption{Real Minecraft cave (\cref{fig:real_caves}, left) with $n=195$; white crosses signify light level below $\Lmin$. Number of constraint violations in parentheses. The \QUBO instances were solved with TabuSA.}
\label{fig:minecraft-cave-1}
\end{figure}
In \cref{fig:20x15_02a,fig:20x15_07a,fig:40x30_00a,fig:40x30_01a} we depict ADMM solutions for randomly generated heightmaps of varying sizes, using the TabuSA solver.
For every map, two intermediate results during the Lagrangian learning phase are shown.
We can see that fewer and fewer constraints are violated, until finally all tiles are lit with a small number of torches.

\begin{figure}[t]
	\centering
	\begin{subfigure}[t]{0.24\textwidth}
		\centering
		\includegraphics[width=\textwidth]{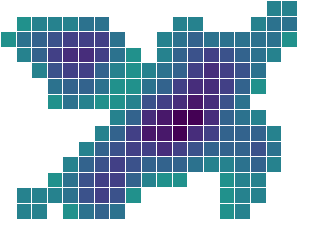}
		\subcaption{Heightmap}
	\end{subfigure}
	\begin{subfigure}[t]{0.24\textwidth}
		\centering
		\includegraphics[width=\textwidth]{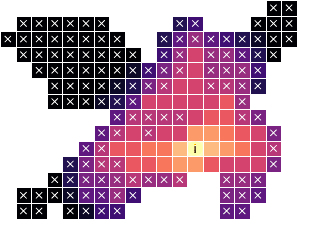}
		\caption{2 iterations (128)}
	\end{subfigure}
	\begin{subfigure}[t]{0.24\textwidth}
		\centering
		\includegraphics[width=\textwidth]{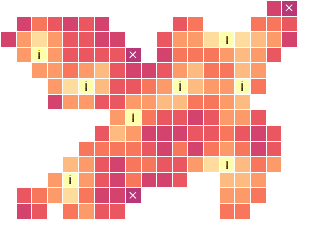}
		\caption{10 iterations (3)}
	\end{subfigure}
	\begin{subfigure}[t]{0.24\textwidth}
		\centering
		\includegraphics[width=\textwidth]{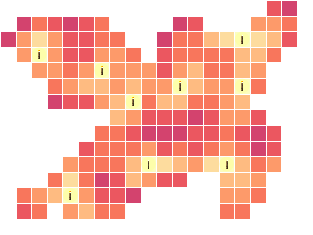}
		\caption{20 iterations (0)}
	\end{subfigure}
	\caption{Generated Minecraft cave with $n=168$; white crosses signify light level below $\Lmin$. Number of constraint violations in parentheses.}
	\label{fig:20x15_02a}
\end{figure}

\begin{figure}[t]
	\centering
	\begin{subfigure}[t]{0.24\textwidth}
		\centering
		\includegraphics[width=\textwidth]{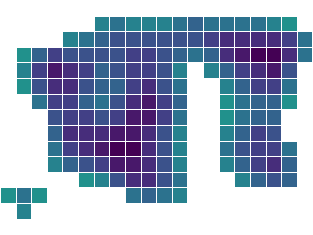}
		\subcaption{Heightmap}
	\end{subfigure}
	\begin{subfigure}[t]{0.24\textwidth}
		\centering
		\includegraphics[width=\textwidth]{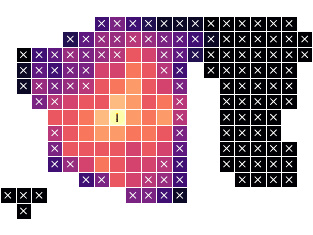}
		\caption{2 iterations (122)}
	\end{subfigure}
	\begin{subfigure}[t]{0.24\textwidth}
		\centering
		\includegraphics[width=\textwidth]{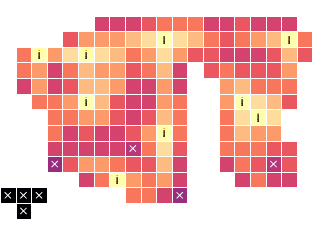}
		\caption{10 iterations (8)}
	\end{subfigure}
	\begin{subfigure}[t]{0.24\textwidth}
		\centering
		\includegraphics[width=\textwidth]{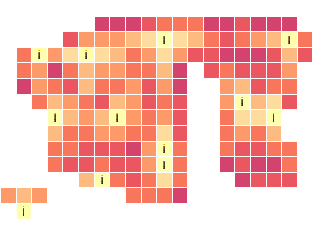}
		\caption{20 iterations (0)}
	\end{subfigure}
	\caption{Generated Minecraft cave with $n=169$; white crosses signify light level below $\Lmin$. Number of constraint violations in parentheses.}
	\label{fig:20x15_07a}
\end{figure}

\begin{figure}[t]
	\centering
	\begin{subfigure}[t]{0.24\textwidth}
		\centering
		\includegraphics[width=\textwidth]{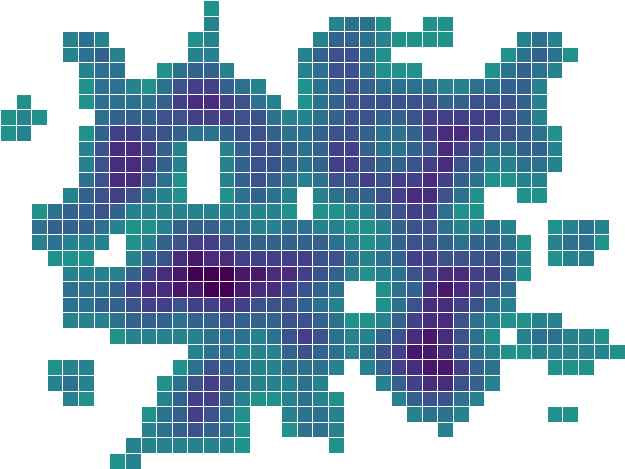}
		\subcaption{Heightmap}
	\end{subfigure}
	\begin{subfigure}[t]{0.24\textwidth}
		\centering
		\includegraphics[width=\textwidth]{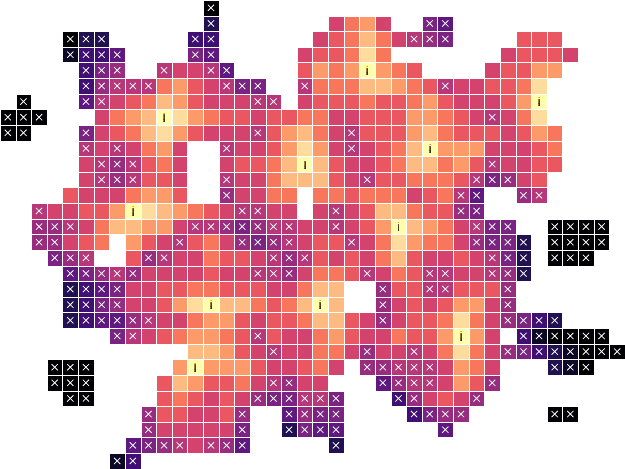}
		\caption{2 iterations (243)}
	\end{subfigure}
	\begin{subfigure}[t]{0.24\textwidth}
		\centering
		\includegraphics[width=\textwidth]{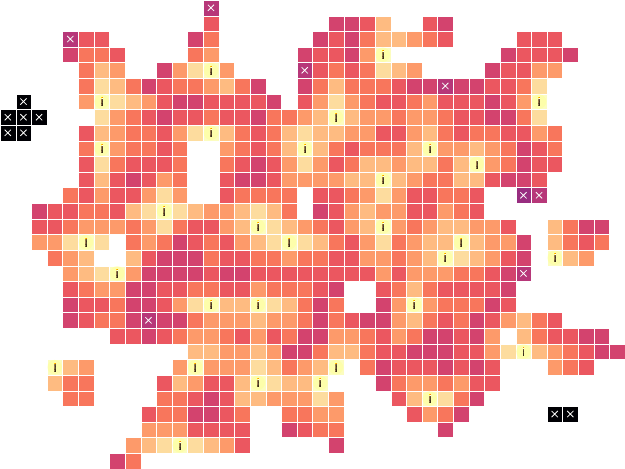}
		\caption{10 iterations (16)}
	\end{subfigure}
	\begin{subfigure}[t]{0.24\textwidth}
		\centering
		\includegraphics[width=\textwidth]{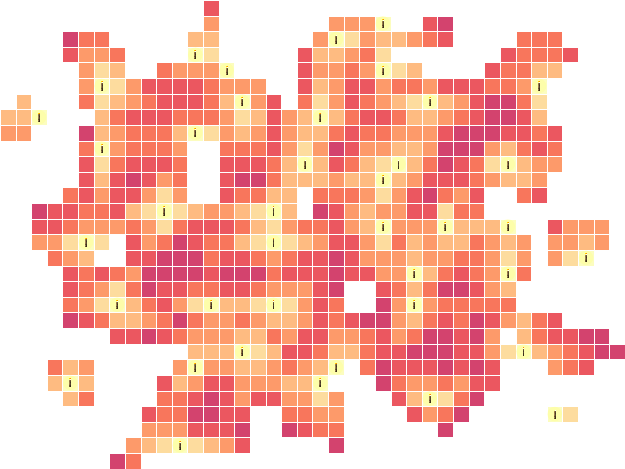}
		\caption{30 iterations (0)}
	\end{subfigure}
	\caption{Generated Minecraft cave with $n=711$; white crosses signify light level below $\Lmin$. Number of constraint violations in parentheses.}
	\label{fig:40x30_00a}
\end{figure}

\begin{figure}[t]
	\centering
	\begin{subfigure}[t]{0.24\textwidth}
		\centering
		\includegraphics[width=\textwidth]{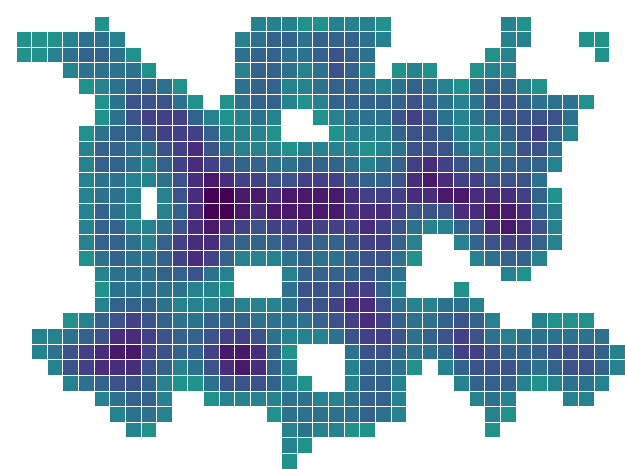}
		\subcaption{Heightmap}
	\end{subfigure}
	\begin{subfigure}[t]{0.24\textwidth}
		\centering
		\includegraphics[width=\textwidth]{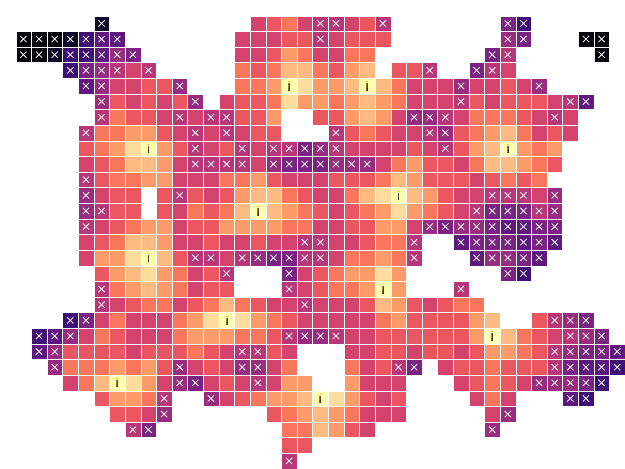}
		\caption{2 iterations (243)}
	\end{subfigure}
	\begin{subfigure}[t]{0.24\textwidth}
		\centering
		\includegraphics[width=\textwidth]{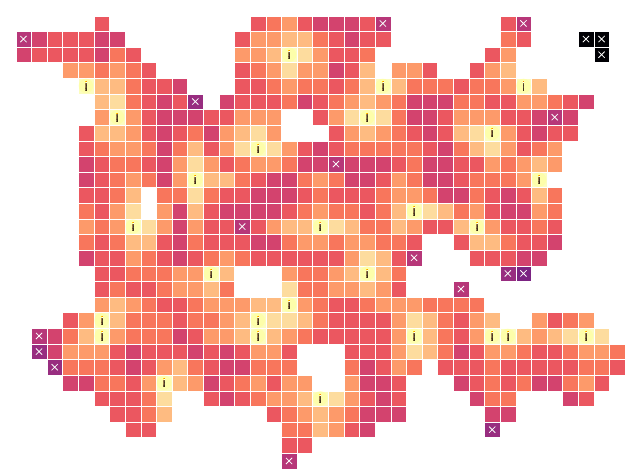}
		\caption{10 iterations (16)}
	\end{subfigure}
	\begin{subfigure}[t]{0.24\textwidth}
		\centering
		\includegraphics[width=\textwidth]{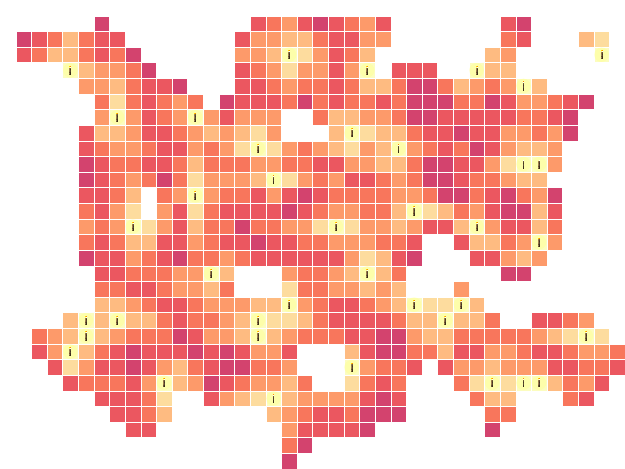}
		\caption{30 iterations (0)}
	\end{subfigure}
	\caption{Generated Minecraft cave with $n=708$; white crosses signify light level below $\Lmin$. Number of constraint violations in parentheses.}
	\label{fig:40x30_01a}
\end{figure}

\section{Related Work}
\label{sec:related}

Analyzing games in terms of their computational complexity has been part of computer science research for decades \cite{kaye2000,kendall2008,guala2014}.
Minecraft, in particular, has been used as a tool for research \cite{nebel2016}.

As mentioned in \cref{sec:eliminating}, it turns out that \TP is a special case of \SC, where the subsets have a spatial interpretation in discrete 3D space.
A \QUBO formulation of \SC is given in \cite{lucas2014}, where auxiliary qubits are used to encode inequality constraints.
We chose a different strategy by employing ADMM \cite{boyd2011} to learn the Lagrange multipliers $\vec \lambda$ iteratively in a hybrid quantum-classical fashion, which leads to smaller \QUBO instances with fewer qubits.
A similar approach is followed in \cite{djidjev2023} using a custom iterative Lagrangian algorithm, which we tried and found not to yield good results with our inequality constraints.
ADMM, which we chose instead, is more firmly grounded in theory.

We also find that \TP and \SC are similar to other problems such as \textsc{MaximumCoverage}, where an upper limit $k$ is given, and -- staying in the context of \TP -- we are looking to light up the largest possible area using at most $k$ torches.

\section{Conclusion}
\label{sec:conclusion}

In this article, we showed that the \TP problem arising from the video game Minecraft can be solved on quantum computers.
To this end, we used an ADMM-based procedure to learn Lagrange multipliers in a \QUBO formulation, which we then solved on a quantum annealer.
In contrast to other methods for solving the more general \SC problem we require no auxiliary variables.
The results demonstrated that the method works well for heightmaps with up to 700 floor tiles.
Due to limitations of today's noisy quantum hardware, we were able to solve the problem on an actual quantum annealer with heightmaps up to $n\leq 100$.

While we considered \TP from a video game perspective, it generalizes to similar setups, where we want to cover a certain area by using as few resources as possible:
Sensors or surveillance cameras with a fixed finite set of possible locations come to mind.
When we can define a distance between these locations, and we require a maximum distance from each location to the nearest resource, our method is applicable.

To the best of our knowledge, this article is the first instance of quantum computing being applied to solving a problem arising from Minecraft.
While this is a more lighthearted context to employ this relatively new computing resource, it has the potential to grant us insights that may be useful for applications to real-word problems, and serve as an illustrative example of how to approach problems from a quantum computing perspective.

\begin{acknowledgments}
    This research has been funded by the Federal Ministry of Education and
    Research of Germany and the state of North Rhine-Westphalia as part of the
    Lamarr Institute for Machine Learning and Artificial Intelligence.
\end{acknowledgments}

\newpage

\end{document}